\documentclass{amsart}
\usepackage{amsmath,amssymb}
\usepackage{graphicx,color}

\parskip 3pt

\newtheorem{theorem}{Theorem}
\newtheorem{proposition}[theorem]{Proposition}

\newtheorem{lemma}[theorem]{Lemma}

\theoremstyle{remark}

\theoremstyle{definition}

\def\th{^{\text{th}}}
\def\tS{\widetilde{S}}
\DeclareMathOperator\tr{tr}

\def\bp{\bar p}
\def\bP{\bar P}

\newcommand{\al}{\alpha}
\newcommand{\be}{\beta}
\newcommand{\de}{\delta}
\newcommand{\ep}{\varepsilon}

\newcommand{\ga}{\gamma}

%

\def\RR{\mathbb{R}}

\newcommand\sub[1]{_{\mathrm{#1}}}

\renewcommand\leq\leqslant
\renewcommand\geq\geqslant
\newcommand\pd\partial

\begin{document}
\title[Uniqueness and characterization theorems]{Uniqueness and characterization theorems\\ for generalized entropies}
\date{Febrary 1, 2017}
\author{Alberto Enciso}
\address{Instituto de Ciencias Matem\'aticas, Consejo Superior de
  Investigaciones Cient\'\i ficas, 28049 Madrid, Spain}
\email{aenciso@icmat.es}

\author{Piergiulio Tempesta}
\address{Departamento de F\'\i sica Te\'orica II, Universidad
  Complutense de Madrid, 28040 Madrid, Spain,
and Instituto de Ciencias Matem\'aticas (CSIC-UAM-UCM-UC3M), 28049 Madrid, Spain}
\email{p.tempesta@fis.ucm.es}
\maketitle

\begin{abstract}
The requirement that an entropy function be composable is key: it means that the entropy of a compound system can be
calculated in terms of the entropy of its independent components.
We prove that, under mild regularity assumptions, the only composable
generalized entropy in trace form is the Tsallis one-parameter family
(which contains
Boltzmann--Gibbs as a particular case).

This result leads to the use of generalized entropies that are
not of trace form, such as R\'enyi's entropy, in the study of complex systems.
In this direction, we also present a characterization theorem for a
large class of composable non-trace-form entropy functions with
features akin to those of R\'enyi's entropy.

\end{abstract}

\tableofcontents


\section{Introduction}

The first example of a generalized entropy in information theory goes
back to the pioneering work of R\'enyi~\cite{Renyi1960} in the
early '60s, where he introduced a one-parameter family of entropies
that reduces to the classical Shannon entropy for a concrete value of
the parameter. R\'enyi was interested in the form of the most general
information measure satisfying certain natural requirements, in
particular additivity with respect to the composition of independent
statistical systems. The entropy introduced by Tsallis in 1988 in \cite{Tsallis} has been widely
investigated too: it is the \textit{non-additive} entropic form most
widely used in the natural and social sciences (see \cite{Tsbib} for an updated bibliography on the issue).

After these works, many different generalized entropic functions have
been constructed as non-additive information measures of a statistical
system (see, e.g., \cite{BR98}, \cite{Kaniadakis02}, \cite{Abe03},
\cite{Naudts08}, \cite{Hanel11}). The main motivation for these new
entropic forms lies, generally speaking, in the emergence of the
theory of complex systems, which often exhibits new phenomena that
require novel carefully designed information-theoretical tools for their interpretation.
For instance, several entropies other than Boltzmann--Gibbs have played a relevant role in the study of quantum entanglement \cite{CC1,CC2,RC1}, in the theory
of divergences generalizing the classical Kullback-Leibler one
\cite{KL,EH}, in geometric information theory \cite{AN}, and in theoretical linguistics and social sciences.

Several approaches have been proposed to classify the plethora of
entropy functions that have appeared in the literature over the last
decades. The standard axiomatic approach is based on the work by
Shannon \cite{Shannon} and Khinchin \cite{Khinchin}, who characterized
the Boltzmann entropy (within the class of trace-class entropies, which
we will define later) in terms of four requirements, now called the
Shannon--Khinchin (SK) axioms. Essentially, these axioms correspond to
the hypotheses that entropy, as a function defined on a certain space
of probability distributions, be continuous (SK1), expansible (i.e.,
adding an event of zero probability does not change the entropy)
(SK2), and that the uniform distribution maximizes the entropy
(SK3). The fourth axiom (SK4) is the additivity of the entropy, that
is, that the entropy of the composition of two subsystems
is the sum of their individual entropies.

It stands to reason that if one relaxes the additivity condition in axiom
(SK4), new possible functional forms of the entropy may arise. A
convenient way of doing this is to replace the
axiom~(SK4) by the weaker \textit{composability axiom}\/, introduced
in~\cite{PAOP16}. Roughly speaking, this axiom asserts that the entropy $S$ of a
compound system $A\cup B$ consisting of two \textit{independent}
systems~$A$ and~$B$ should be computable just in terms of the individual
entropies of~$A$ and~$B$. Using the notation $S(A)$ to represent the entropy of the
system~$A$, this means that there is a function of two variables, $\Phi(x,y)$, such that
\begin{equation}\label{composable}
S(A\cup B)=\Phi(S(A),S(B))
\end{equation}
for any independent systems  $A$ and $B$. Property~\eqref{composable}
is of fundamental importance; indeed, as in the case of the Boltzmann--Gibbs entropy,
it implies that an entropic function is properly defined on
macroscopic states of a given system, so that it can be computed
without knowing any information on the underlying microscopical
dynamics. This is the reason for which this property is key to ensure that the
entropy is physically meaningful. It should be mentioned that the
composability of an entropy also has tangible consequences from
an information-theoretical point of view \cite{ST16}.

Let us also recall the classical result by Lieb and Yngvason
\cite{LY99}, where the existence of an entropy function is derived
from monotonicity, additivity and extensivity requirements for all
allowed states. In the proof, the fundamental thermodynamic meaning of the
composability of classical entropy is laid bare.

A priori, the requirement that an entropy by composable is actually
even stronger than it looks. What we mean by this is that one must take into account that the combination of statistical systems (which is customarily
represented as~$A\cup B$ but is typically given by a tensor product of vector spaces) is \textit{associative} and \textit{commutative}\/. In addition, if we compound a system with another one in a zero-entropy configuration, the entropy of the compound system should be equal to the entropy of the first system. Therefore, it is crucial to demand that
\begin{eqnarray}\label{group-structure}
\nonumber \Phi(x,y)=\Phi(y,x) \qquad \Phi(x,0)=x \\
\Phi(x, \Phi(y,z))= \Phi(\Phi(x,y),z)
\end{eqnarray}
When these properties are satisfied, $S$ will be said to be a
(strictly) \textit{composable} generalized entropy \cite{PAOP16,
  PPROCA16}. If the composability axiom is satisfied only when the
subsystems are described by the uniform distribution, then we shall
say that the entropy $S$ is \textit{weakly composable}\/.
A vast majority of the most popular generalized entropies in the
literature are at least weakly composable.

The notion of group entropy is a direct consequence of the previous
discussion. In the usual statistical picture, we shall describe a system possessing a certain number $W$ of microstates by a probability density, which is a $W$-component
vector of nonnegative reals satisfying
\[
p=(p_1,\dots, p_W)\,,\qquad p_i\geq0\,,\qquad \sum_{i=1}^W p_i=1\,.
\]
Hence $p_i$ is the probability that the system~$A$ is in the $i\th$ microstate.
A \textit{group entropy} is a nonnegative function $S(p_1,\ldots,p_W)$ that satisfies the axioms (SK1)--(SK3) and is strictly composable.
In other words, in addition to the usual requirements of continuity,
expansibility and concavity, a group entropy (due to the properties
\eqref{composable}-\eqref{group-structure}) also possesses a
group-theoretical structure represented by the product~$\Phi(x,y)$ and
which holds in {all} the probability
distribution space associated with a given complex system.

Prime examples of group entropies are the Boltzmann and R\'enyi entropies, which satisfy~\eqref{composable} with the additive law $\Phi(x,y)=x+y$. Non-additive laws have been studied extensively
too, since they arise naturally in the context of complex systems, where the entropy of the total system is expected to different from the
sum of the entropies of the independent parts. Tsallis's entropy, for
instance, obeys the law $\Phi(x,y)= x+y+ (1-q) xy$, corresponding to
the multiplicative formal group law. Although we will not rely on
these techniques for this paper, it worth mentioning that formal group
theory \cite{Haze}, intensively investigated in algebraic topology
since the second half of XX century, provides a natural classification
of group laws in terms of formal power series, as it offers very general algebraic results and a natural language to formulate the theory of generalized entropies.

The main problem that we address in this paper is to ascertain which
is the most general form that a strictly composable generalized
entropy can take. For the sake of generality, in the subsequent
analysis we will not even assume that $\Phi(x,y)$ be a group law: we
shall only assume that it is an arbitrary two-variable function, with
suitable regularity properties, and derive from first principles that, in particular, it has
to be a group law.

Before stating our results, let us recall the expression of the
Boltzmann--Gibbs entropy of a system in a state described by a
probability distribution $p=(p_1,\dots, p_W)$  reads
$$
S\sub{BG}(p):=\sum_{i=1}^{W} p_i\,\ln\frac1{p_i}\,.
$$
(Throughout this work, we will set Boltzmann's constant to one: $k_{\mathrm{B}}=1$.) Inspired by this expression, perhaps the most common way of constructing entropy functions is in {\em trace form}, which in the above notation means that there is a
one-variable nonnegative function
$f(t)$ such that
\begin{subequations}\label{Sf}
\begin{equation}
S(p)=\sum_{i=1}^W f(p_i)\, ,
\end{equation}
with the constraints
\begin{equation}\label{f0f1}
f(0)=f(1)=0\, .
\end{equation}
\end{subequations}
This condition ensures that the entropy is zero if the probability
distribution of a system is $p_i=\de_{i,1}$ (that is, the system is in
a certainty state). If a quantum system is described by a density matrix~$\rho$, then the corresponding quantum entropy can be directly computed as $\tr f(\rho)$.

Since Shannon's foundational paper in information
theory~\cite{Shannon}, an extensive body of literature on trace-form
entropies has appeared. The prototype example of trace-form generalized entropy is the one-parameter generalization of Boltzmann's entropy introduced by Tsallis, given for $q\geq0$ and $q\neq1$ by the formula
\begin{equation}\label{Sq}
S_q(p):=\sum_{i=1}^Wf_q(p_i)\,,\qquad f_q(t):=\frac{t-t^q}{q-1}\,.
\end{equation}
As $q$ tends to $1$ one recovers the Boltzmann--Gibbs entropy, so it
is customary to set $S_1(p):=S\sub{BG}(p)$ and $f_1(t):=t\ln\frac1t$. A two-parameter presentation of the Tsallis entropy was recently introduced in~\cite{PPROCA16}.

Our first result shows that, in a way, the class of trace-form
entropies has a serious drawback: we prove that, under mild regularity
assumptions, \textit{the only composable trace-form entropy is the
  Tsallis entropy}\/, with Boltzmann-Gibbs as a particular
case. Consequently, in order to construct new entropies one must
either assume that they are not in trace form or to deal with the fact
that if one has two independent systems $A$ and $B$, the entropy of
the total system will not be determined, in general, in terms of the
entropies of the independent subsystems. Of course, this also implies that no group-theoretical structure is available. 
The result can be stated as follows.

\begin{theorem}\label{T.main}
  Let $S$ be an entropy of the form~\eqref{Sf} with a
  function $f$ of class $C^2((0,1))\cap
  C^1([0,1])$. Suppose that
  these entropies satisfy the
  condition~\eqref{composable} with a composition law $\Phi$ of class
  $C^1$. Then there are positive
  real constants $c,q$ such that
\[
f(t)=c\, f_q(t)\,,
\]
so $S$ is the Tsallis entropy~\eqref{Sq} for some real~$q$, up to a
multiplicative constant. The composition law is
$\Phi(x,y)=x+y+\al xy$ for some explicit constant~$\al$ depending on~$c$ and~$q$.
\end{theorem}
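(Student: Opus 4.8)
The plan is to convert composability into a functional equation and then extract $f$ from it by differentiation. If $A,B$ are independent systems with distributions $p=(p_1,\dots,p_W)$ and $q=(q_1,\dots,q_V)$, then $A\cup B$ is described by the tensor product $(p_iq_j)$, so \eqref{composable} becomes
\begin{equation*}
\sum_{i=1}^W\sum_{j=1}^V f(p_iq_j)=\Phi\Big(\sum_{i=1}^W f(p_i),\ \sum_{j=1}^V f(q_j)\Big).
\end{equation*}
Since a one-point distribution has zero entropy (using $f(1)=0$, and $f(0)=0$ to discard null events) and the tensor product is commutative and associative, one reads off immediately that $\Phi(x,0)=x$, that $\Phi$ is symmetric, and (from the three-fold tensor product) that $\Phi$ is associative; in particular \eqref{group-structure} holds. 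We may assume $f\not\equiv 0$, since otherwise there is nothing to prove.

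To find the shape of $f$, fix $q$ and differentiate the displayed identity with respect to two components $p_1,p_2$ of $p$, using the constraint $\sum_ip_i=1$ so that each differentiation also hits the dependent component. Subtracting the two resulting identities cancels the dependent term and yields
\begin{equation*}
\sum_{j}q_j\big[f'(p_1q_j)-f'(p_2q_j)\big]=\Phi_1\big(S(p),S(q)\big)\,\big[f'(p_1)-f'(p_2)\big],
\end{equation*}
which uses only $f\in C^1$ and $\Phi\in C^1$. Now take $p$ with at least four states, fix $q$ and the pair $(p_1,p_2)$ so that $f'(p_1)\neq f'(p_2)$ and the left-hand side is nonzero, and let the remaining components of $p$ vary: then $S(p)$ sweeps an interval while everything else in the identity stays fixed, so $\Phi_1(u,v)$ must be independent of its first argument $u$. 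Hence $\Phi(u,v)=u\,\psi(v)+v$ (using $\Phi(0,v)=v$), and symmetry of $\Phi$ forces $\psi(v)=1+\al v$ for a constant $\al$; that is, $\Phi(x,y)=x+y+\al xy$, already the multiplicative group law.

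Feeding $\Phi_1(S(p),S(q))=1+\al S(q)$ back in, the subtracted identity says $\sum_jq_jf'(tq_j)-(1+\al S(q))f'(t)$ is independent of $t$. Take $q$ to be the uniform distribution on $W$ points and differentiate once in $t$ to remove that constant: this gives $f''(t/W)=W\big(1+\al Wf(1/W)\big)f''(t)$, i.e.\ $H(t/W)=\mu_W\,H(t)$ for $H(t):=t^2f''(t)$, and iterating the relation shows $W\mapsto\mu_W$ is completely multiplicative (and no $\mu_W$ vanishes, as one checks). Here $H$ is continuous and, since $f$ is not linear, automatically nonvanishing: a zero of $H$ would propagate via these relations to a dense set and force $H\equiv0$. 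Setting $2^{-\nu}:=\mu_2$ makes $H(t)/t^{\nu}$ invariant under halving, while a density argument using the irrationality of $\log2/\log3$ upgrades this to $\mu_W=W^{-\nu}$ for every $W$; then $H(t)/t^{\nu}$ is invariant under all rational rescalings, hence constant, so $t^2f''(t)=Ct^{\nu}$. Integrating twice and imposing $f\in C^1([0,1])$ with $f(0)=f(1)=0$ rules out $\nu\leq1$ and yields $f(t)=c\,(t-t^{\nu})/(\nu-1)=c\,f_{\nu}(t)$ with $c>0$. Setting $q:=\nu$ and reading $\al$ off from the $W=2$ relation gives $\al=(1-q)/c$, the asserted composition law.

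The step I expect to be the real obstacle is this last one: recognizing that the integer-scaling identities $H(t/W)=\mu_WH(t)$, together with multiplicativity of $\mu$ and mere continuity, already force a single exponent $\nu$ — the density of $\{m\log2+n\log3:m,n\in\mathbb Z\}$ and the nonvanishing of $H$ are used here in an essential way. A secondary difficulty is organizing the reduction in the second paragraph so that $S(p)$ genuinely decouples from $(p_1,p_2)$ and the quantities $f'(p_1)-f'(p_2)$ and the left-hand side do not degenerate; this is where one uses the full strength of $f\in C^2$. By contrast, the passage from composability to the functional equation and the final integration of $t^2f''=Ct^{\nu}$ are routine.
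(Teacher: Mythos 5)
Your Step~1 (deriving $\Phi(x,y)=x+y+\al xy$) is essentially the paper's argument: take first variations of the composability identity with respect to $p^A$ holding the normalization, observe that the resulting identity forces $D_1\Phi$ to be independent of its first argument, and then pin down the remaining freedom using $\Phi(x,0)=x$, $\Phi(0,y)=y$ and symmetry. (The paper instead takes variations in both $p^A$ and $p^B$ to get bilinearity and then imposes the boundary values; the two are interchangeable.) Your Step~2, however, genuinely diverges. The paper takes a \emph{second} variation, now in $p^B$, of the identity $\sum_j p_j^B[f'(p_l^Ap_j^B)-f'(p_W^Ap_j^B)]=(1+\al\sum_j f(p_j^B))[f'(p_l^A)-f'(p_W^A)]$, and then evaluates at $p^B_m=\de_{m1}$; this converts the functional equation in one stroke into the Euler equation $tf''(t)+(1-q)f'(t)=\mathrm{const}$ with $q=\al(f'(1)-f'(0))$, which integrates directly to the Tsallis form. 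You instead specialize $p^B$ to uniform distributions to get the scaling relations $H(t/W)=\mu_W H(t)$ for $H=t^2f''$ and run an Erd\H{o}s-type multiplicativity-plus-density argument. This is close in spirit to the analytic-case argument the paper sketches in its final section, with continuity and the density of $\{m\log 2-n\log 3\}$ replacing analyticity.

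The gap is in the two assertions you dispose of parenthetically: that no $\mu_W$ vanishes and that $H$ is nowhere zero. Everything downstream (the very definition of $\nu$ via $\mu_2=2^{-\nu}$, which needs $\mu_2>0$; the upward propagation of zeros of $H$ to a dense set, which needs $\mu_V\neq0$; the constancy of $H(t)/t^{\nu}$) rests on these, and they do \emph{not} follow from the uniform-distribution relations you have extracted. Concretely, the relations $H(t/W)=\mu_WH(t)$ are perfectly consistent with $\mu_W=0$ for every $W\geq2$, $H\equiv0$ on $(0,1/2)$ and $H$ arbitrary on $(1/2,1)$; this corresponds to $f(t)=-t/\al$ on $[0,1/2]$, and one checks that even the full equation evaluated on pairs of uniform distributions is satisfied by such an $f$. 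Ruling this scenario out requires going back to the identity with \emph{non-uniform} $p^B$ (e.g.\ differentiating $\sum_jq_jf'(tq_j)=(1+\al S(q))f'(t)+c(q)$ in $t$ and exploiting two- and three-state $q$ to show that a single zero of $f''$ forces $f''\equiv0$ near the origin and ultimately $f\equiv0$). That supplementary argument is real work, not a routine check, and it is precisely what the paper's second variation renders unnecessary. So: the architecture of your proof is sound and the density step you worried about does go through, but as written the proof is incomplete at the nonvanishing claim; either supply that argument or adopt the second-variation shortcut.
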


At the same time, it is worth stressing that not all the entropies commonly employed in information theory are of trace form. For example, the celebrated R\'enyi entropy \cite{Renyi1960}
\[
S\sub{R}(p):=\frac1{1-\al}\ln\bigg(\sum_{i=1}^W p_i^\al\bigg), \qquad \alpha>0
\]
is indeed not in this class. Hence, inspired by the form of R\'enyi's entropy, it is natural to consider generalized entropies of the form
\begin{subequations}\label{S2}
\begin{equation}
\tS(p)=g\Bigg(\sum_{i=1}^W h(p_i)\Bigg)\,,
\end{equation}
where
\begin{equation}
h(0)=g(h(1))=0\,.
\end{equation}
\end{subequations}
These conditions once again ensure that the entropy of a system with probability distribution $p_i=\de_{i1}$ is zero.
The function $g$ is typically assumed to be monotone. Note that the
class \eqref{S2}, under the additional constraint that $h$ be a
concave function, was first considered in \cite{SMMP93}. A quantum
version of this family of entropies was studied in \cite{BZHPL16}.  
Nontrivial examples of entropy functions of this form are provided by the large class of $Z$-entropies, recently introduced in
\cite{PPROCA16}. They  are multiparametric non-trace-form {group
  entropies} of the form \eqref{S2} which, under mild assumptions,
reduce to the standard R\'enyi entropy in a suitable limit.

Our second result is a characterization of the generalized entropies
of the form~\eqref{S2} that are composable. Again for the sake of
generality, we do not assume that the function~$g$ is convex. Our result essentially asserts that, for any given monotone function~$g$, a function of the form \eqref{S2} satisfies the composability condition if
and only if the ``trace part'' is of the form $h(t)=at+bt^q$ for some
real constants. More precisely, we have the following statement:

\begin{theorem}\label{T.main2}
  Let $S$ be an entropy of the form~\eqref{S2}, where
  the function $h$ is of class $C^2((0,1))\cap
  C^1([0,1])$ and $g$ is a $C^1$ function
  with $g'\neq0$. Then
  the entropy~$S$ satisfies the composability
  condition~\eqref{composable} with $\Phi$ of class $C^1$ if and only if
\[
h(t)=at+bt^q
\]
for some real constants $a,b$ and $q>1$. The composition law can be written
down explicitly in terms of these constants and the function~$g$.
\end{theorem}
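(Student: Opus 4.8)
The plan is to strip off the outer function $g$, reduce composability to a functional equation for the inner function $h$, and then differentiate that equation to a scaling relation that forces $h$ to be a combination of $t$ and $t^q$. This runs parallel to the proof of Theorem~\ref{T.main}; the one structural difference is that here we do not have $h(1)=0$ at our disposal, which is exactly why the answer broadens from $h\propto t-t^q$ to a general $h=at+bt^q$.

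First I would dispose of $g$. Put $T(p):=\sum_{i=1}^W h(p_i)$, so that $S(p)=g(T(p))$; the values of $T$ fill an interval $I$, on which $g$ is $C^1$ with $g'\neq0$, hence strictly monotone and, by the inverse function theorem, a $C^1$ diffeomorphism onto its image. Writing the composability identity $S(A\cup B)=\Phi(S(A),S(B))$ for two independent systems $A=(p_i)$, $B=(r_j)$ — for which $A\cup B$ carries the product distribution $(p_ir_j)$ — and applying $g^{-1}$ gives
\begin{equation}\label{E.fe}
\sum_{i,j}h(p_ir_j)=\Psi\!\Big(\sum_ih(p_i),\ \sum_jh(r_j)\Big),\qquad \Psi(x,y):=g^{-1}\!\big(\Phi(g(x),g(y))\big)\in C^1(I\times I).
\end{equation}
Thus it suffices to show that \eqref{E.fe} together with $h(0)=0$, $h\in C^2((0,1))\cap C^1([0,1])$ and $\Psi\in C^1$ forces $h(t)=at+bt^q$.

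Next I would extract a differential equation. Take $A$ with four states, which I write $(q_1,q_2,q_3,p_0)$ with $p_0:=1-q_1-q_2-q_3$, and regard $(q_1,q_2,q_3)$ as a free point in the open region where all four entries are positive. Differentiating \eqref{E.fe} in $q_1$ and in $q_2$ (this is where $h\in C^1([0,1])$ is used; $h\in C^2((0,1))$ enters below) and subtracting, the reservoir contributions cancel and one is left with
\begin{equation}\label{E.diff1}
\sum_j r_j\big[h'(q_1r_j)-h'(q_2r_j)\big]=\partial_1\Psi\big(T(A),T(B)\big)\,\big[h'(q_1)-h'(q_2)\big].
\end{equation}
Here is the step I expect to be the crux. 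If $h'$ is constant then $h(t)=at$ (the degenerate alternative $b=0$); otherwise fix $q_1,q_2$ with $h'(q_1)\neq h'(q_2)$. The left side of \eqref{E.diff1} and the bracket on the right do not involve the spectator $q_3$, whereas moving $q_3$ drives $T(A)$ across a subinterval; varying also $q_1,q_2$ and the number of states (padding with probability-zero events leaves $T$ unchanged) covers all of the connected set $I$. Hence $\partial_1\Psi(x,y)$ cannot depend on $x$, say $\partial_1\Psi(x,y)=\lambda(y)$, and \eqref{E.diff1} becomes $\sum_j r_j[h'(q_1r_j)-h'(q_2r_j)]=\lambda(T(B))[h'(q_1)-h'(q_2)]$. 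Differentiating this in $q_1$ and choosing $B$ to be the uniform distribution on $n$ states yields the scaling relation
\begin{equation}\label{E.scaling}
h''(q/n)=\mu_n\,h''(q)\qquad\text{for all }q\in(0,1),\ n\in\mathbb{N},
\end{equation}
for suitable constants $\mu_n$.

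Finally I would solve \eqref{E.scaling}. If $h''\equiv0$ we are in the case $h(t)=at$. Otherwise \eqref{E.scaling} forces $h''$ to have no zeros and $n\mapsto\mu_n$ to be completely multiplicative, and a rigidity argument — continuity of $h''$ together with the incommensurability of $\log2$ and $\log3$ (or, alternatively, the continuum of relations obtained by taking $B$ two-state instead of uniform) — then gives $h''(t)=c\,t^{\beta}$ for constants $c\neq0$, $\beta$. Integrating and imposing $h(0)=0$ yields $h(t)=at+bt^q$ with $q:=\beta+2$ and $b:=c/\big((q-1)q\big)$; moreover $h'\in C^0([0,1])$ means $h''\in L^1$ near $0$, i.e.\ $\beta>-1$, i.e.\ $q>1$. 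For the converse, $b=0$ makes $S$ constant, hence trivially composable; for $b\neq0$ one has $\sum_ip_i^q=(T(A)-a)/b$, so $T(A\cup B)=a+(T(A)-a)(T(B)-a)/b$, whence $\Psi(x,y)=a+(x-a)(y-a)/b$ and
\[
\Phi(u,v)=g\!\left(a+\frac{\big(g^{-1}(u)-a\big)\big(g^{-1}(v)-a\big)}{b}\right)\in C^1,
\]
and a direct substitution confirms $S(A\cup B)=\Phi(S(A),S(B))$. The two genuinely non-routine points are the elimination of the $T(A)$-dependence of $\partial_1\Psi$ (with the connectedness bookkeeping it entails) and the rigidity step solving \eqref{E.scaling}; everything else is calculation.
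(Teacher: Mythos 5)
Your reduction is correct and matches the paper's strategy up to the midpoint: stripping off $g$ via $\Psi:=g^{-1}\circ\Phi\circ(g\times g)$ is exactly the paper's Step~1, and your first-variation identity
\[
\sum_j r_j\bigl[h'(q_1 r_j)-h'(q_2 r_j)\bigr]=\lambda\bigl(T(B)\bigr)\,\bigl[h'(q_1)-h'(q_2)\bigr]
\]
is precisely Equation~\eqref{eq4p}. The genuine gap is in your final step. After differentiating in $q_1$ you keep only the uniform distributions $B$, which leaves you with the discrete scaling relation $h''(q/n)=\mu_n\,h''(q)$, and that relation alone does \emph{not} imply $h''(t)=c\,t^{\beta}$. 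For instance, any continuous $h''$ vanishing identically on $(0,1/2]$ but not on $(1/2,1)$ satisfies it with $\mu_n=0$ for all $n\geq 2$, and this sequence is completely multiplicative; so your claims that the scaling relation ``forces $h''$ to have no zeros'' and that the incommensurability of $\log 2$ and $\log 3$ then gives rigidity both break down --- the density argument requires $\mu_n\neq 0$, which is never established. The parenthetical fallback (``the continuum of relations obtained by taking $B$ two-state'') does not rescue this as stated, since a two-state $B$ yields the two-term identity $r^2h''(qr)+(1-r)^2h''(q(1-r))=\lambda(\cdot)h''(q)$, not a scaling relation.

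The paper closes this step differently, and more robustly: it performs a \emph{second} variation, now in $p^B$, obtaining the two-parameter functional identity \eqref{eq5} (with $h$ in place of $f$), and then evaluates at $p^B_m=\delta_{m1}$. Using the $C^1$ regularity of $h$ up to the endpoints this collapses to the genuine ODE $t\,h''(t)+(1-q)\,h'(t)=\mathrm{const}$ with $q:=\al\bigl(h'(1)-h'(0)\bigr)$ (Equations \eqref{eq6} and \eqref{formf}), whose solutions in $C^1([0,1])$ with $h(0)=0$ are exactly $at+bt^{q}$ with $q>1$. You should replace your uniform-$B$ specialization by this second variation (or otherwise exploit the full family of $B$'s before discretizing). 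The remainder of your write-up --- the elimination of the $x$-dependence of $\partial_1\Psi$, the extraction of $q>1$ from integrability of $h''$ near $0$, and the explicit converse with $\Phi(u,v)=g\bigl(a+(g^{-1}(u)-a)(g^{-1}(v)-a)/b\bigr)$, which agrees with the paper's \eqref{Phi-final} upon setting $\be=a+b$ and $\al=1/b$ --- is fine.
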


It should be remarked that the regularity hypothesis is probably not sharp, but it is crucially used in a rather tricky derivation of
differential equations from some (rather unmanageable) functional equations that lies at the core of the proof of the theorems. Notice
that the functions $f(t)=(t-t^q)/(q-1)$ and $h(t)=t^\al$ that respectively appear in the Tsallis and R\'enyi entropies are of
class~$C^1$ at~0 precisely for $q>1$ and $\al \geq1$, and that we are
not obtaining the function $h(t)=at+bt\ln\frac1t$ because we require
$h$ to be continuously differentiable at~$0$.

Let us conclude the Introduction with some comments about the proof of
these results and the organization of the paper. The proof of both
theorems, respectively presented in Sections~\ref{S.proof}
and~\ref{S.proof2}, are based on similar arguments, so let us
illustrate them in the case of Theorem~\ref{T.main}. The proof of this
result involves three ideas. Firstly, an easy argument shows that the
function~$\Phi$ appearing the composition law must be an associative,
commutative product. With some more work, which involves taking
variations in the equation with respect to the probability densities,
we show that in fact the only admissible composition function is
$\Phi(x,y)=x+y+\al xy$ for some real constant~$\al$. A key tool to
prove this will be a lemma on the functional independence of certain
functions that we present in Section~\ref{S.lemma}. It is worth
mentioning that the result on the structure of the composition
function holds
with less stringent regularity hypotheses (specifically, for any $f$
absolutely continuous on~$[0,1]$). In passing from the expression
for~$\Phi$ to the general form of~$f$ we must indeed use the
$C^1$~regularity of~$f$ up to the endpoints of the interval to get rid
of some of the several parameters that our argument relies on. Some
further observations in this direction are presented in
Section~\ref{S.remarks}.

\section{A lemma on the functional dependence of traces}
\label{S.lemma}

In this section we will prove a key lemma on the structure of the
functions that satisfy a certain kind of functional relations which is
strongly related with the composability
condition~\eqref{composable}.

To motivate this result, it is convenient to start
by recalling a basic definition about independent systems that will be
used in the rest of the paper.
Given two probability distributions
\begin{subequations}\label{products}
\begin{equation}
p^A=(p_i^A)_{i=1}^W \quad \text{and} \quad p^B=(p^B_j)_{j=1}^{W'}\,,
\end{equation}
their number of states respectively being $W$ and $W'$, by the total
system $A\cup B$ we mean the $WW'$~state systems described by the probability
distribution
\begin{equation}
p^{A\cup B}=(p_{ij}^{A\cup B})_{1\leq i\leq W,1\leq j\leq W'}\,,
\end{equation}
with
\begin{equation}
p_{ij}^{A\cup B}:= p^A_i p^B_j\,.
\end{equation}
\end{subequations}
It is this formula, together with the expressions~\eqref{Sf}
and~\eqref{S2} for the entropies that we will study in this paper, which leads us next to consider expressions of the
form
\[
\sum_{i=1}^W\sum_{j=1}^{W'} F(p_i^Ap_j^B)\,.
\]

The following lemma, which is the main result of this section, asserts that
if the functions
\[
\sum_{i=1}^W\sum_{j=1}^{W'} F(p_i^Ap_j^B)\,,\quad \sum_{i=1}^W F(p_i
^A)\quad\text{and}\quad \sum_{j=1}^{W'}F(p_j^B)
\]
are functionally dependent (as functions of the probability densities
$p^A$ and $p^B$), then the expression of the first of these functions
in terms of the other two is given by a very simple algebraic relation
that only depends on four parameters.

\begin{lemma}\label{L.Phi}
Suppose that there is a one-variable function $F\in C^1((0,1))$ and a
$C^1$ function $\Psi$ of two variables such that, for any probability densities $p^A$ and $p^B$ as
above, one has
\begin{equation}\label{eq}
\sum_{i=1}^W\sum_{j=1}^{W'} F(p_i^Ap_j^B)=\Psi\bigg(\sum_{i=1}^W F(p_i ^A),\sum_{j=1}^{W'}F(p_j^B)\bigg)\,.
\end{equation}
Then necessarily
\[
\Psi(x,y)=a_0+a_1x+a_2y+a_3 xy
\]
for some real constant~$a_j$.
\end{lemma}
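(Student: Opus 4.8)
The plan is to exploit the arbitrariness of the number of states $W,W'$ and the values of the probability densities by differentiating the functional equation~\eqref{eq}. First I would fix $W'$ and a distribution $p^B$, and view both sides of~\eqref{eq} as functions of $p^A=(p_1^A,\dots,p_W^A)$ on the simplex. Writing $x:=\sum_i F(p_i^A)$, $y:=\sum_j F(p_j^B)$, and $u_j:=\sum_i F(p_i^A p_j^B)$, the left side is $\sum_{j=1}^{W'} $ of a ``rescaled copy'' of the same kind of sum. Taking the derivative with respect to $p_k^A$ (say along a tangent direction $\partial_{p_k^A}-\partial_{p_\ell^A}$ to stay on the simplex) turns~\eqref{eq} into
\begin{equation*}
\sum_{j=1}^{W'} p_j^B\,\bigl[F'(p_k^A p_j^B)-F'(p_\ell^A p_j^B)\bigr]
= \Psi_x\Bigl(x,y\Bigr)\,\bigl[F'(p_k^A)-F'(p_\ell^A)\bigr]\,.
\end{equation*}
The key point is that the right-hand factor $F'(p_k^A)-F'(p_\ell^A)$ does not depend on $p^B$ at all, while the left-hand side is a genuine function of $p^B$; so $\Psi_x(x,y)$ must, as a function of $p^B$, be expressible through the single scalar combination appearing on the left. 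Dividing two such relations (for two different choices of the pair $(k,\ell)$, or against the undifferentiated equation) should let me separate variables.

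The cleaner route is probably to differentiate~\eqref{eq} once in a $p^A$-direction and once in a $p^B$-direction. The mixed second derivative of the left side, $\partial_{p_k^A}\partial_{p_m^B}$ (again taken along simplex-tangent directions to kill the Lagrange-multiplier terms), produces a term of the form $\sum\!\sum$ of $F'+tF''$ evaluated at products $p_i^A p_j^B$, but crucially only the $j=m$ (resp.\ $i=k$) terms survive, collapsing the double sum to something like $\partial_{p_k^A}\partial_{p_m^B}\sum_i F(p_i^A p_m^B)$. On the right side, the mixed derivative of $\Psi(x,y)$ is $\Psi_{xy}(x,y)\,\partial_{p_k^A}x\,\partial_{p_m^B}y$. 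Here $\partial_{p_k^A}x$ is a function of $p^A$ only and $\partial_{p_m^B}y$ a function of $p^B$ only. Comparing and then using the freedom to vary $W,W'$ and the densities — in particular sending some probabilities to make $x$ and $y$ independently take a range of values — I expect to deduce that $\Psi_{xy}$ is constant, hence $\Psi(x,y)=a_3xy+(\text{function of }x)+(\text{function of }y)$. Plugging this back into~\eqref{eq} and again comparing the $p^A$- and $p^B$-dependence (e.g.\ setting $p^B$ to be the one-state distribution $W'=1$, $p_1^B=1$, which forces the ``function of $x$'' to be affine, and symmetrically for $y$) should pin down $\Psi(x,y)=a_0+a_1x+a_2y+a_3xy$.

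The main obstacle I anticipate is legitimately justifying the ``separation of variables'' step: one has to argue that as $p^A$ and $p^B$ range over all admissible densities for all $W,W'$, the pair $(x,y)=\bigl(\sum F(p_i^A),\sum F(p_j^B)\bigr)$ sweeps out a set with nonempty interior (or at least a product set of positive-measure factors), so that identities holding there genuinely constrain $\Psi$ and its derivatives on an open set. This needs $F$ to be nonconstant and some mild control near the endpoints — and indeed one may have to split into the degenerate case where $F$ (or $F'$) is essentially constant, in which case~\eqref{eq} forces $F$ affine and the conclusion is immediate with $a_3=0$. A secondary technical nuisance is bookkeeping the simplex constraint $\sum p_i^A=1$ throughout the differentiations; using difference operators $\partial_{p_k^A}-\partial_{p_\ell^A}$ (or adding a third state as a ``reservoir'') systematically avoids spurious Lagrange-multiplier contributions. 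Once separation is achieved, the rest is elementary: an affine function composed into~\eqref{eq} plus the one-state specializations collapse everything to the claimed four-parameter form.
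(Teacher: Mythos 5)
Your first displayed identity is exactly the paper's own first-variation equation (obtained by moving $p^A$ along a simplex-tangent curve and using the arbitrariness of the direction), so up to that point you are on the paper's track. The gap lies in what you commit to next. The ``cleaner route'' of taking a further derivative in a $p^B$-direction to show that $\Psi_{xy}$ is constant uses derivatives the hypotheses do not provide: differentiating the left-hand side in $p^B$ produces $F''$ terms, and differentiating $\Psi_x(x,y)$ in $p^B$ produces the mixed partial $\Psi_{xy}$, while the lemma only assumes $F\in C^1((0,1))$ and $\Psi\in C^1$. (The paper even stresses later that this lemma survives with $F$ merely absolutely continuous, and in the application $\Phi$ is only $C^1$, so a second derivative of $\Psi$ is never available; second derivatives of $f$ are used only in Step~2 of the theorem, after the form of $\Phi$ is already known.)

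The observation that actually closes the argument at first order --- and which your ``key point'' misses --- is not the $p^B$-independence of the bracket $F'(p_k^A)-F'(p_\ell^A)$, but the fact that the left-hand side of your first identity is independent of the spectator components $p_i^A$, $i\neq k,\ell$, whereas the first argument $x=\sum_i F(p_i^A)$ of $\Psi_x$ can be varied by moving precisely those components (take $W\geq 4$) with everything else held fixed. This forces $\Psi_x(x,y)=a(y)$, hence $\Psi(x,y)=a(y)\,x+b(y)$; exchanging the roles of $p^A$ and $p^B$ gives $\Psi(x,y)=\tilde a(x)\,y+\tilde b(x)$, and the two representations together yield $\Psi(x,y)=a_0+a_1x+a_2y+a_3xy$ directly --- with no second derivatives, no separate determination of the affine parts, and no endpoint specialization $W'=1$, $p_1^B=1$ (which in any case invokes $F(1)$, not defined under $F\in C^1((0,1))$; the endpoint evaluations belong to the theorem proofs, where $f\in C^1([0,1])$). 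Your alternative first-order suggestion (``dividing two such relations \dots should let me separate variables'') is too vague to substitute for this step. Your concern about the range swept by $(x,y)$ is legitimate, but it is shared by the paper's own argument, since the conclusion is only needed, and only established, on the set of values actually attained.
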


\begin{proof}
We shall next take variations in the identity~\eqref{eq} with respect to
the probability density $p^A$. Notice that the space of probability
densities with $W$ states is
\begin{align}\label{set}
\mathcal P_W&:=\bigg\{p=(p_1,\dots, p_W)\in\RR^W:p_i\geq 0,\; \sum_{i=1}^W
  p_i=1\bigg\}\,,
\end{align}
which one can understand as a bounded subset of $\RR^{W-1}$ with
nonempty interior.
To consider variations of the probability
density~$p^A$, let us choose without loss of generality a probability
density $p^A$ that does not lie on the boundary of the
set~\eqref{set}, i.e., such that $p^A_i>0$ for all~$i$. Then the curve
\begin{equation}\label{p(s)}
p(s)=\bigg(p^A_1+ sP_1,p^A_2+sP_2,\dots, p^A_{W-1}+sP_{W-1},p^A_{W-1}-s\sum_{l=1}^{W-1}P_l\bigg)
\end{equation}
is contained in the set~\eqref{set} for an arbitrary vector
\[
P:=(P_1,\dots,P_{W-1})
\]
in $\RR^{W-1}$ with $|P|\leq 1$, and for all~$s$ in a small enough
interval $s\in(-\ep,\ep)$.

Inserting this curve in the identity~\eqref{eq} we obtain
\begin{equation}\label{eq2}
\sum_{i=1}^W\sum_{j=1}^{W'} F(p_i(s)\,p_j^B)-\Psi\bigg(\sum_{i=1}^W
F(p_i(s)),\sum_{j=1}^{W'}F(p_j^B)\bigg)=0
\end{equation}
for all $s\in (-\ep,\ep)$ and all $P$ in the unit ball of
$\RR^{W-1}$. Differentiating this relation at $s=0$ one then obtains
\begin{multline}
\sum_{l=1}^{W-1}\bigg(\sum_{j=1}^{W'}p_j^B\,\big[F'(p_l^Ap_j^B)-F'(p_W^Ap_j^B)\big]\\
-
D_1\Psi\bigg(\sum_{i=1}^W F(p_i^A),\sum_{j=1}^{W'} F(p_j^B)\bigg)\,
\big[ F'(p_l^A)-F'(p_W^A)\big]\bigg)\, P_l=0\,,\label{sumW0}
\end{multline}
where $D_1\Psi$ denotes the derivative of the function $\Psi$ with
respect to first argument. Given that this relation must hold for all
$P\in\RR^{W-1}$ with $|P|\leq 1$, we infer that for all
$1\leq l\leq W-1$ one has
\begin{multline}\label{sumW}
\sum_{j=1}^{W'}p_j^B\,\big[F'(p_l^Ap_j^B)-F'(p_W^Ap_j^B)\big] \\
=D_1\Psi\bigg(\sum_{i=1}^W F(p_i^A),\sum_{j=1}^{W'} F(p_j^B)\bigg)\,
\big[ F'(p_l^A)-F'(p_W^A)\big]
\end{multline}

As the left hand side of the identity~\eqref{sumW} does
not depend on $p_i^A$ for $i\neq l, W$, choosing without loss of
generality $W\geq 4$ (so that the dimension of the space of
probability densities is at least~3) it follows that so must be the
right hand side. Hence we infer that
\begin{equation}\label{eq3}
D_1\Psi\bigg(\sum_{i=1}^W F(p_i^A),\sum_{j=1}^{W'} F(p_j^B)\bigg)=a\bigg( \sum_{j=1}^{W'} F(p_j^B)\bigg)
\end{equation}
for some function~$a$ or, to put it differently,
\[
\frac\pd{\pd x}\Psi(x,y)=a(y)\,.
\]
This can be immediately integrated
to yield
\begin{equation}\label{F1}
\Psi(x,y)=a(y) x + b(y)\,,
\end{equation}
with $b$ another arbitrary function.

One can reserve the role of $p^A$ and $p^B$ and consider variations of
the identity~\eqref{eq} with respect to the probability density
$p^B$. Arguing as above we then infer that
\[
\frac\pd{\pd y}\Psi(x,y)=\widetilde a(x)
\]
for some function $\widetilde a$,
or equivalently
\begin{equation}\label{F2}
\Psi(x,y)=\widetilde a(x) y + \widetilde b(x)\,,
\end{equation}
with $\widetilde b$ another arbitrary function. From~\eqref{F1}
and~\eqref{F2} we obtain that $\Psi(x,y)$ is a polynomial of order~1 both
in~$x$ and~$y$ (separately), so
\[
\Psi(x,y)=a_0+a_1x+a_2y+a_3xy
\]
for some real constants $a_j$.
\end{proof}

\section{Proof of Theorem~\ref{T.main}}
\label{S.proof}

In this section we present the proof of the
theorem, which consists of two steps.

\subsubsection*{Step 1: The composition function is $\Phi(x,y)=x+y+\al
  xy$}

Using the notation~\eqref{products} introduced in the previous section, the composability
condition~\eqref{composable} reads as
\begin{equation}\label{eq*}
\sum_{i=1}^W\sum_{j=1}^{W'} f(p_i^Ap_j^B)=\Phi\bigg(\sum_{i=1}^W f(p_i ^A),\sum_{j=1}^{W'}f(p_j^B)\bigg)\,.
\end{equation}
Since this relation must hold for all probability distributions $p^A$
and $p^B$, Lemma~\ref{L.Phi} ensures that the composition law must be
of the form
\begin{equation}\label{Phi0}
\Phi(x,y)=a_0+a_1x+a_2y+a_3xy
\end{equation}
for some real constants $a_j$.

Let us now evaluate~\eqref{eq*} when the second probability
distribution is $p^B_j=\de_{j1}$. As $f(0)=f(1)=0$, we then get that
for any $p^A$ we have
\[
\sum_{i=1}^W f(p_i^A)=\Phi\bigg(\sum_{i=1}^W f(p_i ^A),0\bigg)\,,
\]
which means that
\begin{equation}\label{Phi1}
\Phi(x,0)=x\,.
\end{equation}
Taking now $p^A_i=\de_{i1}$ and an arbitrary $p^B$ we similarly obtain
\[
\Phi(0,y)=y\,,
\]
which together with~\eqref{Phi1} ensures that the only function
$\Phi(x,y)$ of the form~\eqref{Phi0} that one can have here is
\begin{equation}\label{formPhi}
\Phi(x,y)=x+y+\al xy
\end{equation}
with $\al$ a real constant.

\subsubsection*{Step 2: The general form of~$f(t)$ is that of Tsallis entropy}

To find the expression for~$f$, let us consider variations in the
identity~\eqref{eq*}, just as in the proof of
Lemma~\ref{L.Phi}. Indeed, substituting the probability density~$p^A$
by the curve $p(s)$ defined in~\eqref{p(s)} and differentiating at
zero, we obtain that
\begin{multline*}
\sum_{l=1}^{W-1}\bigg(\sum_{j=1}^{W'}p_j^B\,\big[f'(p_l^Ap_j^B)-f'(p_W^Ap_j^B)\big]\\
-
\bigg(1+\al\sum_{j=1}^{W'} f(p_j^B)\bigg)\,
\big[ f'(p_l^A)-f'(p_W^A)\big]\bigg)\, P_l=0\,.
\end{multline*}
Here we have used that the function~$\Phi$ is given by~\eqref{formPhi}.
As the constants $P_l$ are arbitrary, this shows that for all $1\leq
l\leq W-1$ one has
\begin{equation}\label{eq4}
\sum_{j=1}^{W'}p_j^B\,\big[f'(p_l^Ap_j^B)-f'(p_W^Ap_j^B)\big]=
\bigg(1+\al\sum_{j=1}^{W'} f(p_j^B)\bigg)\,
\big[ f'(p_l^A)-f'(p_W^A)\big]\,.
\end{equation}

Let us now consider variations with respect to the probability
density~$p^B$. Just as in the proof Lemma~\eqref{L.Phi}, let us assume that $p^B$ is
not on the boundary of the set of $W'$-state probability densities
$\mathcal P_{W'}$ (i.e., $p_ j^B>0$ for all $1\leq j\leq W'$). (We
recall that the set $\mathcal P_{W'}$ was defined in~\eqref{set}). Then one can take a small
enough $\ep$ such that the curve
\begin{equation}\label{barP}
\bp(s):=\bigg(p_1^B+s\bP_1, p_2^B+s\bP_2,\dots, p_{W'-1}^B+s\bP_{W'-1}, p^B_{W'}-s\sum_{m=1}^{W'-1}\bP_m\bigg)
\end{equation}
is contained in $\mathcal P_{W'}$ for all $s\in (-\ep,\ep)$ and each vector
\[
\bP=(\bP_1,\dots, \bP_{W'-1})
\]
in $\RR^{W'-1}$ with $|\bP|\leq 1$.

Evaluating the identity~\eqref{eq4} on $p^B=\bp(s)$ and differentiating at
$s=0$ we then get that for all $1\leq l\leq W-1$ one has
\begin{multline}\label{eqbarP}
  \sum_{m=1}^{W'-1}\bigg[f'(p_l^Ap_m^B)-f'(p_l^Ap_{W'}^B)-f'(p_W^Ap_m^B)+f'(p_W^Ap_{W'}^B)+p_l^Ap_m^Bf''(p_l^Ap_m^B)\\
  - p_l^Ap_{W'}^Bf''(p_l^Ap_{W'}^B)-p_W^Ap_m^Bf''(p_W^Ap_m^B)+
  p_W^Ap_{W'}^Bf''(p_W^Ap_{W'}^B)\\
-\al[f'(p_m^B)-f'(p_{W'}^B)][f'(p_l^A)-f'(p_W^A)]\bigg]\,
  \bP_m=0\,.
\end{multline}
Since this holds for all $\bP$ in the unit ball of $\RR^{W'-1}$, this
ensures that for all $1\leq l\leq W-1$ and $1\leq m\leq W'-1$ one has
\begin{multline}\label{eq5}
f'(p_l^Ap_m^B)-f'(p_l^Ap_{W'}^B)-f'(p_W^Ap_m^B)+f'(p_W^Ap_{W'}^B)+p_l^Ap_m^Bf''(p_l^Ap_m^B)\\
  - p_l^Ap_{W'}^Bf''(p_l^Ap_{W'}^B)-p_W^Ap_m^Bf''(p_W^Ap_m^B)+
  p_W^Ap_{W'}^Bf''(p_W^Ap_{W'}^B)
\\
=\al[f'(p_m^B)-f'(p_{W'}^B)][f'(p_l^A)-f'(p_W^A)]\,.
\end{multline}

Without loss of generality let us take
$W\geq3$ to ensure that the variables $p_i^A,p_W^A$
are independent. To transform the functional equation~\eqref{eq5} into
a differential equation, it is convenient
to evaluate this identity on the probability distribution
$p^B_m=\de_{m1}$. (Notice that, although we have used that $p^B_j>0$
for all~$j$ to derive the equation, by continuity it must also hold
for this choice of $p^B$.) Let us fix a certain~$l$ and write
\[
t:=p^A_l\,,\qquad \tau:=p^A_W\,,
\]
Equation~\eqref{eq5} then reads as
\begin{equation}\label{eq6}
tf''(t)+(1-q) f'(t)= \tau f''(\tau)+(1-q)  f'(\tau)\,,
\end{equation}
where we have used the fact that $f'$ is continuous up to the
endpoints of the interval $[0,1]$ to set
\[
q:=\al(f'(1)-f'(0))\,.
\]

Since $t$ and $\tau$ are independent, Equation~\eqref{eq6} implies that
\[
tf''(t)+(1-q)  f'(t)=-c\,,
\]
where $c$ is a constant. For $q>0$, the solution of this equation is given in
terms of two arbitrary constants as
\begin{equation}\label{formf}
f(t)=\frac{c t}{q-1}+c_1 t^{q}+c_2 \,,
\end{equation}
so the conditions $f(0)=f(1)=0$ then imply that
\[
f(t)=c\frac{t-t^q}{q-1}
\]
where the normalization constant $c$ remains arbitrary. The case
$q\leq1$ leads to functions that are not of class $C^1$ at
$0$. Theorem~\eqref{T.main} then follows.

\section{Proof of Theorem~\ref{T.main2}}
\label{S.proof2}

In this section we present the proof of Theorem~\ref{T.main2}, which
relies on the same kind of ideas as that of Theorem~\ref{T.main}. Just as
before, it is convenient to divide the proof in two steps:

\subsubsection*{Step 1: Derivation of the composition law}

The
starting point is the composability condition~\eqref{composable},
which using the notation~\eqref{products} and the form of the metric
one can write as
\[
g\bigg(\sum_{i=1}^W\sum_{j=1}^{W'}
h(p_i^Ap_j^B)\bigg)=\Phi\bigg(g\bigg(\sum_{i=1}^W h(p_i ^A)\bigg) ,g\bigg(\sum_{j=1}^{W'}h(p_j^B)\bigg)\bigg)\,.
\]
As the function $g$ has a $C^1$ inverse $g^{-1}$ because $g'\neq0$,
one can define the $C^1$ function
\[
\widetilde \Phi(x,y):=g^{-1}(\Phi(g(x),g(y)))\,,
\]
in terms of which the above relation reads as
\begin{equation}\label{eqF}
\sum_{i=1}^W\sum_{j=1}^{W'}
h(p_i^Ap_j^B)=\widetilde \Phi\bigg(\sum_{i=1}^W h(p_i ^A) ,\sum_{j=1}^{W'}h(p_j^B)\bigg)\,.
\end{equation}
Since this identity holds true for any probability densities $p^A$ and
$p^B$, Lemma~\ref{L.Phi} then ensures that there are constants $a_j$
such that
\begin{equation}\label{formF}
\widetilde \Phi(x,y)=a_0+a_1x+a_2y+a_3 xy\,.
\end{equation}

To compute the values of the constants~$a_j$, let us now evaluate the
identity~\eqref{eqF} when $p^B_j=\de_{j1}$. Since $h(0)=0$, letting
\[
\be:=h(1)
\]
we then obtain that
\[
\sum_{i=1}^W h(p_i ^A)=\widetilde \Phi\bigg(\sum_{i=1}^W h(p_i ^A),\be\bigg)\,,
\]
for any probability density $p^A$, that is,
\begin{equation}\label{F1}
\widetilde \Phi(x,\be)=x\,.
\end{equation}
If we now take the probability density $p^A_i=\de_{i1}$ and an
arbitrary $p^B$, we analogously arrive at
\begin{equation}\label{F2}
\widetilde \Phi(\be,y)=y\,.
\end{equation}
A straightforward computation then shows that the only functions of the
form~\eqref{formF} that satisfy~\eqref{F1} and~\eqref{F2} are
\[
\widetilde \Phi(x,y)=x+y-\be+\al(x-\be)(y-\be)\,,
\]
where $\al$ is an arbitrary real constant. This shows that the
composition law is
\begin{equation}\label{Phi-final}
\Phi(x,y)=g\big(g^{-1}(x)+ g^{-1}(y)-\be+\al(g^{-1}(x)-\be)(g^{-1}(y)-\be)\big)\,.
\end{equation}

\subsubsection*{Step 2: The form of the function~$h$}

Here we shall proceed by taking variations just as in Step~2 of the
proof of Theorem~\ref{T.main}. Let us sketch the details.

First we take variations with respect to the probability density
$p^A$, so we replace $p^A$ by the curve $p(s)$ (cf.\
Equation~\eqref{p(s)}) in the identity~\eqref{eqF}. Taking derivatives
at $s=0$ and using the explicit form of $\widetilde \Phi$
(Equation~\eqref{Phi-final}) we obtain that
\begin{multline*}
\sum_{l=1}^{W-1}\bigg(\sum_{j=1}^{W'}p_j^B\,\big[h'(p_l^Ap_j^B)-h'(p_W^Ap_j^B)\big]\\
-
\bigg(1-\al\be+\al\sum_{j=1}^{W'} h(p_j^B)\bigg)\,
\big[ h'(p_l^A)-h'(p_W^A)\big]\bigg)\, P_l=0\,.
\end{multline*}
As the constants $P_l$ are arbitrary, this shows that for all $1\leq
l\leq W-1$ one has
\begin{equation}\label{eq4p}
\sum_{j=1}^{W'}p_j^B\,\big[h'(p_l^Ap_j^B)-h'(p_W^Ap_j^B)\big]=
\bigg(1-\al\be+\al\sum_{j=1}^{W'} h(p_j^B)\bigg)\,
\big[ h'(p_l^A)-h'(p_W^A)\big]\,.
\end{equation}
This is the analog of Equation~\eqref{eq4}.

Now we take variations with respect to $p^B$ in~\eqref{eq4p}. That is,
we replace $p^B$ by the curve $\bar p(s)$ introduced in~\eqref{barP} and
take the derivative at $s=0$ to obtain that for all $1\leq l\leq W-1$ one has
\begin{multline}\label{eqbarP}
  \sum_{m=1}^{W'-1}\bigg[h'(p_l^Ap_m^B)-h'(p_l^Ap_{W'}^B)-h'(p_W^Ap_m^B)+h'(p_W^Ap_{W'}^B)+p_l^Ap_m^Bh''(p_l^Ap_m^B)\\
  - p_l^Ap_{W'}^Bh''(p_l^Ap_{W'}^B)-p_W^Ap_m^Bh''(p_W^Ap_m^B)+
  p_W^Ap_{W'}^Bh''(p_W^Ap_{W'}^B)\\
-\al[h'(p_m^B)-h'(p_{W'}^B)][h'(p_l^A)-h'(p_W^A)]\bigg]\,
  \bP_m=0\,.
\end{multline}
This is exactly Equation~\eqref{eqbarP}, but with $h$
playing the role of~$f$. Hence we infer from Equation~\eqref{formf}
that $h$ must be of the form
\[
h(t)=at+b t^{q}+c\,,
\]
where $a,b,c$ are real constants. As $h(0)=0$, we must have $c=0$,
which completes the proof of the theorem.

\section{The cases of absolutely continuous or analytic functions}
\label{S.remarks}

In this section we will present a couple of remarks about the regularity assumptions in our results. As the proofs of both theorems
involve essentially the same ideas, for concreteness we will make this remarks only in the context of the first theorem (that is, trace-form
entropies); the extension to entropies of the form~\eqref{S2} is straightforward.

We have determined the form of the composition law, which is $\Phi(x,y)=x+y+\al xy$ by means of
Theorem~\ref{T.main} and is given by~\eqref{Phi-final} in
Theorem~\ref{T.main2}) through Lemma~\ref{L.Phi}. The first
observation is that this lemma holds under considerably weaker
regularity assumptions, and that in fact we do not even need to assume
that the composability condition holds for all probability densities:
if it holds in any open subset of the space of probability densities
(for example), for densities that are close enough to the uniform
distributions $p^A_i=1/W$, $p^B_j=1/W'$), the argument goes
through. More precisely, we have the following:

\begin{proposition}
Let $S$ be an entropy of the form~\eqref{Sf} satisfying the
composability condition~\eqref{composable} in a small neighborhood of
any two probability densities $\hat p^A$ and $\hat p^B$. If $f$ is absolutely
continuous on $[0,1]$ and $\Phi$ is of class~$C^1$, then in the
neighborhood under consideration the
composition law must be $\Phi(x,y)=x+y+\al xy$ for some real constant~$\al$.
\end{proposition}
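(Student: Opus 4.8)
The plan is to reduce the proposition to a local version of Lemma~\ref{L.Phi} and then re-run the elementary endpoint argument of Step~1 of the proof of Theorem~\ref{T.main}. First I would observe that, for probability densities $p^A$ near a fixed interior point $\hat p^A$ and $p^B$ near $\hat p^B$, the quantities $\sum_i f(p^A_i)$ and $\sum_j f(p^B_j)$ still range over open intervals (absolute continuity of $f$ guarantees $\sum f(p_i)$ is at least continuous, and we may assume the variation directions chosen in \eqref{p(s)} actually move these sums, else the conclusion is vacuous on a lower-dimensional set). The composability identity \eqref{eq*} therefore holds for all $(p^A,p^B)$ in a product of open neighborhoods, which is exactly the hypothesis needed to take the variations \eqref{p(s)} and \eqref{barP}.

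Next I would rerun the variational computation in the proof of Lemma~\ref{L.Phi}. Differentiating \eqref{eq*} along $p(s)$ at $s=0$ gives \eqref{sumW} verbatim, now valid for all $p^A$ in the open neighborhood; since the left-hand side of \eqref{sumW} does not depend on the coordinates $p^A_i$ with $i\neq l,W$, and there are at least two such free coordinates once $W\geq4$, the factor $D_1\Psi(\sum_i f(p^A_i),\sum_j f(p^B_j))$ must be independent of those coordinates, hence (since $\sum_i f(p^A_i)$ does depend on them, on a dense set of directions) it is a function of $\sum_j f(p^B_j)$ alone. This yields $\partial_x\Psi(x,y)=a(y)$ locally, hence $\Psi(x,y)=a(y)x+b(y)$ on the relevant open set; the symmetric argument in $p^B$ gives $\Psi(x,y)=\tilde a(x)y+\tilde b(x)$, and comparing the two forces $\Psi(x,y)=a_0+a_1x+a_2y+a_3xy$ on that open set. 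Only $C^1$ regularity of $\Psi$ and absolute continuity of $f$ (so that one-variable differentiation under the finite sum, and differentiability a.e.\ of $f$, makes sense) are used here.

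Finally I would pin down the constants exactly as in Step~1 of Section~\ref{S.proof}. Even though \eqref{eq*} is only assumed near $(\hat p^A,\hat p^B)$, one can still evaluate it with $p^B$ replaced by a density of the form $p^B=(1-\epsilon)\hat p^B + \epsilon\,\delta_{\cdot 1}$ or, more directly, use the already-established local polynomial form $\Phi(x,y)=a_0+a_1x+a_2y+a_3xy$ together with the constraints coming from letting one system degenerate to certainty: by continuity of everything in sight, $\Phi(x,0)=x$ and $\Phi(0,y)=y$ propagate from the boundary of the probability simplex into the identity, forcing $a_0=0$, $a_1=a_2=1$, so $\Phi(x,y)=x+y+\alpha xy$ with $\alpha=a_3$.

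The main obstacle I expect is a degeneracy issue rather than a computational one: the argument that ``$D_1\Psi$ cannot depend on $p^A_i$ because $\sum_i f(p^A_i)$ does'' can fail if $f$ is locally affine near the relevant values of $p^A_i$, so that $\sum_i f(p^A_i)$ is locally constant in some directions. Handling this cleanly requires either restricting to an open set where $f'$ is genuinely non-constant (using that $f$, being absolutely continuous and equal to $0$ at both endpoints, cannot be affine on all of $[0,1]$ unless it is identically zero, in which case $S\equiv0$ and the statement is trivial) or phrasing the conclusion as holding on the image of the neighborhood under $p\mapsto\sum f(p_i)$. I would flag this in a remark and carry out the argument on an open subset where $f'$ is non-constant, which is enough to conclude $\Phi(x,y)=x+y+\alpha xy$ on a neighborhood and hence, by analyticity of this polynomial, everywhere it is defined.
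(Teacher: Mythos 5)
Your overall skeleton---localize Lemma~\ref{L.Phi}, then pin down the constants as in Step~1---is the same as the paper's, but you pass over the one point the proposition is actually about. When $f$ is merely absolutely continuous, $f'$ exists only almost everywhere and lies in $L^1((0,1))$, so the step ``differentiating \eqref{eq*} along $p(s)$ at $s=0$ gives \eqref{sumW} verbatim'' is precisely what has to be justified; it does not follow from ``differentiability a.e.\ of $f$ makes sense''. Two things need an argument: (i) for a given pair of densities the chain rule at $s=0$ can simply fail, because the points $p_i^Ap_j^B$ and $p_i^A$ may be non-differentiability points of $f$; and (ii) the formal derivative \eqref{sumW0} involves $f'(p_i^Ap_j^B)$, and $h\in L^1((0,1))$ does \emph{not} imply $h(xy)\in L^1((0,1)\times(0,1))$ (the paper's example is $h(t)=\frac1t(\ln\frac2t)^{-2}$), so it is not even clear a priori that the differentiated expression is a locally integrable function of the densities. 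The paper's proof consists exactly of this justification: the factor $p_j^B$ restores integrability, since $x\,h(xy)\in L^1((0,1)\times(0,1))$ whenever $h\in L^1((0,1))$, whence the left-hand side of \eqref{eq2} defines a map of $s$ of class $C^1$ with values in $L^1$ of the density variables, its derivative at $s=0$ is the expected expression $G$, and \eqref{sumW} holds for almost every admissible $(p^A,p^B)$---after which the functional-dependence argument must be run modulo null sets, using that $D_1\Psi\big(\sum_if(p_i^A),\sum_jf(p_j^B)\big)$ is continuous. (A pointwise alternative is possible---for fixed interior $p^B$ and a.e.\ $p^A$ the relevant points avoid the null set where $f'$ fails to exist, since they are images of $p^A$ under nonconstant affine maps---but some such a.e.\ bookkeeping is indispensable and is absent from your write-up.)

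The normalization step is also shaky as written. If the composability identity is only assumed near the fixed pair $(\hat p^A,\hat p^B)$, you may not evaluate it at, or near, the certainty distribution: the density $(1-\epsilon)\hat p^B+\epsilon\,\delta_{\cdot 1}$ stays close to $\hat p^B$, not to $\delta_{\cdot 1}$, and ``propagating $\Phi(x,0)=x$ by continuity from the boundary of the simplex'' invokes the identity in a region where it has not been assumed. Under the reading in which the condition holds near \emph{every} pair of densities (which is what the paper implicitly uses when it asserts that Step~1 goes through), the evaluation at $p^B_j=\delta_{j1}$ works directly---absolutely continuous functions are continuous and $f(0)=f(1)=0$---and no propagation argument is needed; under the strictly local reading, only the bilinear-affine form \eqref{Phi0} on the local range of the entropies can be extracted this way. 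Your closing degeneracy remark (that $\sum_if(p_i^A)$ may fail to move if $f$ is locally affine) is a legitimate observation, but it affects the paper's Lemma~\ref{L.Phi} in exactly the same way and is orthogonal to the new content of the proposition, which is the passage from $C^1$ to absolutely continuous $f$.
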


\begin{proof}
Since the argument we used in Lemma~\eqref{L.Phi} to derive the
differential equations from functional relations is purely local
(because it relies on taking curves $p(s)$ with $s\in (-\ep,\ep)$), it
is clear that the fact that the composability condition only holds in
an open set does not constitute a problem.

The only aspect that one must control to derive the proposition is to
ensure that the proof of Lemma~\ref{L.Phi} also remains valid under
the weaker regularity assumption that $f \in AC([0,1])$ (of course,
our function~$f$ will play the role of the function~$F$ in the lemma). The key point
is to make sense of Equation~\eqref{sumW0} (or, equivalently, \eqref{sumW}). This is formally the derivative at $s=0$
of the map given by the left hand side of~\eqref{eq2}, so our goal is
to make sense of it as a differentiable function of $s$. This is not
hard, but it does not follow from a general distribution-theoretical
argument either.

To this end, recall that the derivative of an absolutely continuous
function is $f'\in L^1((0,1))$. The key feature is that in
Equation~\eqref{sumW0} one does not have to deal with functions of the
form, say, $F'(p_i^A)$, but with $p_j ^B \,F'(p_i ^A p_j^B)$. This
is important because, although the fact that a function $h(t)$ is in $L^1((0,1))$ does
not imply that $h(xy)$ is in $L^1((0,1)\times (0,1))$ (this can be readily
seen by taking $h(t):=\frac1t(\ln\frac 2t)^{-2}$, for instance),
setting $z:=xy$ one sees that
\[
\int_0^1 \int_0^1 x\, |h(xy)|\, dx\, dy= \int_0^1 \int_0^y  |h(z)|\,
dz\, dy\leq \int_0^1|h(z)|\, dz\,.
\]
This shows that $x\, h(xy)\in L^1((0,1)\times (0,1))$ whenever $h(t)\in
L^1((0,1))$.

Getting back to our problem and recalling that
\begin{equation}\label{pWpW'}
p_W^A=1-\sum_{i=1}^{W-1} p_i^A\,,\qquad p_{W'}^B=1-\sum_{i=1}^{W'-1} p_i^B
\end{equation}
can be written in terms of the remaining $W-1$ (respectively $W'-1$)
components of the vector, we then infer that with $p_W^A$ and
$p_{W'}^B$ given by~\eqref{pWpW'},
\begin{multline*}
G\big((p_i^A)_{i=1}^{W-1},(p_j^B)_{j=1}^{W'-1}\big):=\sum_{l=1}^{W-1}\sum_{j=1}^{W'}p_j^B\,\bigg[f'(p_l^Ap_j^B)-f'(p_W^Ap_j^B)\big]\\
-
D_1\Phi\bigg(\sum_{i=1}^W f(p_i^A),\sum_{j=1}^{W'} f(p_j^B)\bigg)\,
\big[ f'(p_l^A)-f'(p_W^A)\big]\bigg)\, P_l
\end{multline*}
defines a map
\[
G\in L^1((0,1)^{W+W'-2})\,,
\]
which depends linearly on the parameters $P_l$. Of course, here we are
using that the function
\[
D_1\Phi\bigg(\sum_{i=1}^W f(p_i^A),\sum_{j=1}^{W'} f(p_j^B)\bigg)
\]
is continuous, so its product with an $L^1$ function is
in $L^1$.
Hence, with $s\in(-\ep,\ep)$, the right hand side of
Equation~\eqref{eq2} defines a map $g(s)$ such that
\[
g\in C ((-\ep,\ep), C((0,1)^{W+W'-2}))\cap C^1((-\ep,\ep), L^1((0,1)^{W+W'-2}))
\]
with derivative $g'(0)=G$. From this it stems that the proof of Step~1
does remain valid for a general $f\in AC([0,1])$.
\end{proof}

The second observation is that the proof of Step~2 in Theorem~\ref{T.main} (that is,
passing from the identity $\Phi(x,y)=x+y+\al xy$ to the general form of $f(t)$)
becomes much easier if we assume that $f$ is are
analytic in $[0,1]$ (of course, this will not be case in general: in fact,
the function appearing in the Tsallis entropy is only analytic at $0$
when $q$ is an integer greater than or equal to~2).

In this simple case, the fact that necessarily $f(t)=c(t-t^q)$ for some constants $c$ and $q$ can be derived from a general result on
functional equations due to Aczel~\cite{Aczel}. Actually, a simple proof of this can be given directly from the composition equation
\[
\sum_{i=1}^W\sum_{j=1}^{W'} f(p_i^Ap_j^B)=\sum_{i=1}^W f(p_i ^A) +
\sum_{j=1}^{W'}f(p_j^B) + \al\sum_{i=1}^W \sum_{j=1}^{W'} f(p_i ^A)\,f(p_j^B)=0\,.
\]
We will present it for the benefit of the reader. We start by considering the uniform probability distributions
\[
p^A_i=\frac1 W\,,\qquad p^B_j=\frac1{W'}
\]
with arbitrarily large numbers of states~$W,W'$. We then have that, setting
$h(t):=f(t)/t$,
\[
h(W^{-1}W'^{-1})=h(W^{-1})+ h(W'^{-1})+\al h(W^{-1})h(W'^{-1})\,.
\]
As the sequences $(W^{-1})_{W=1}^\infty$ and $(W'^{-1})_{W'=1}^\infty$
tend to~$0$, the analyticity of $h$ on $[0,1]$ implies that for all
$s,t\in [0,1]$ one has
\[
h(st)=h(s)+h(t)+\al h(s)h(t)\,.
\]
Hence one can differentiate with respect to the variable~$s$ and
evaluate at $s=1$ to find
\[
t\,h'(t)=\ga\,\big(1+\al h(t)\big)\,,
\]
with $\ga:=h'(1)$, which can be readily integrated to obtain that
\[
h(t)=-\frac1\al +Ct^{-\al\ga}\,.
\]
Since $h(1)=0$, this readily gives
\[
h(t)=c(t^\nu-1)\,,
\]
as we wanted to prove.

\section*{Acknowledgements}

A.E.\ is supported by the ERC Starting Grant~633152 and by the ICMAT--Severo Ochoa grant SEV-2015-0554 (MINECO). P. T. has been partly supported by the research project FIS2015-63966, MINECO, Spain, Spain, and by the ICMAT Severo Ochoa grant SEV-2015-0554.

 \end{document}